\documentclass[conference]{IEEEtran}
\usepackage{graphicx}
\usepackage{subfigure}
\usepackage{amsmath}
\usepackage{theorem}
\usepackage{amssymb}
\usepackage{comment}
\usepackage{cite}
\usepackage{color}
\newtheorem{Theorem}{Theorem}
\newtheorem{lemma}{\textbf{Lemma}}
\newtheorem{Corollary}{Corollary}
\bibliographystyle{IEEEtran}
\newtheorem{DD}{Definition}
\usepackage[]{algorithm2e}
\bibliographystyle{IEEEtran}

\usepackage{fancyhdr}
\rfoot{\thepage}

\begin{document}
\title{Exchanging Third-Party Information with Minimum Transmission Cost}
\author{
\small Xiumin Wang$^\star$, Wentu Song$^\dagger$$^\star$, Chau Yuen$^\star$, Jing Li (Tiffany)$^\ddag$\\
$^\star$ Singapore University of Technology and Design, Singapore\\
$^\dagger$ School of Mathematical Sciences, Peking University, China\\
$^\ddag$ Department of Electrical and Computer Engineering, Lehigh University, Bethlehem, PA 18015\\
\small Email: wangxiumin@sutd.edu.sg, songwentu@gmail.com, yuenchau@sutd.edu.sg, jingli@ece.lehigh.edu
}
\maketitle


\begin{abstract}
In this paper, we consider the problem of minimizing the total transmission cost for exchanging channel state information. We proposed a network coded cooperative data exchange scheme, such that the total transmission cost is minimized while each client can decode all the channel information held by all other clients. In this paper, we first derive a necessary and sufficient condition for a feasible transmission. Based on the derived condition, there exists a feasible code design to guarantee that each client can decode the complete information. We further formulate the problem of minimizing the total transmission cost as an integer linear programming. Finally, we discuss the probability that each client can decode the complete information with distributed random linear network coding.

\end{abstract}
{\bf Keywords}: network coding, cooperative data exchange, channel state information.

\section{Introduction}
In wireless networks, it is always beneficial for the nodes to know the global knowledge of channel state information (CSI), e.g., channel gain or link loss probability, since global information can greatly ease the network optimization and improve the performance.
Generally, such CSI on a connected link can be regarded as a local information and known between two connected nodes (e.g., node $i$ and $j$). However, for a {\em third-party} node, e.g., the node $k\neq i,j$, the channel information of link $(i,j)$ is unknown to it. In some network design, such third-party information communication \cite{Love2007,Aluko2010} is often necessary.

Recently, cooperative data exchange among the users \cite{ElRouayheb2010} becomes one of the most promising approaches in designing efficient data communications.
In cooperative data exchange, each client initially holds a subset of packets, which are broadcast from the server or locally generated by itself.
The objective is to guarantee that all the clients finally obtain the whole packets by cooperatively exchanging
the data. Recent works showed that network coding \cite{Ahlswede2000,Katti2006,Wang2012} can reduce the number of transmissions or delay required for cooperative data exchange. However, finding the optimal solution with network coding to minimize the number of transmissions \cite{ElRouayheb2010,Sprintson2010,Milosavljevic2011} or the transmission cost \cite{Ozgul2011,Tajbakhsh2011} is non-trivial for general cooperative data exchange problem.

The work in \cite{Love2007} designs a coded cooperative data exchange scheme to minimize the number of transmissions for third-party information exchange. Compared with general cooperative data exchange problem, in third-party information exchange, each client initially has the local CSI to all other connected clients, and wants to know all CSI knowledge that is unknown to it. The work in \cite{Love2007} showed an optimal transmission scheme to minimize the total number of transmissions for exchanging third-party information.

Although the work in \cite{Love2007} gives the optimal solution to minimize the total number of transmissions required for third-party information exchange, it does not consider the case where each client is associated with a transmission cost, as studied in \cite{Ozgul2011}.
Consider a three-client network as shown in Fig.~\ref{Fig.thirdparty}, where $x_{i,j}$ denotes the CSI between client $i$ and client $j$. It is assumed that the links are symmetric, i.e., $x_{i,j}=x_{j,i}$. Initially, client $i$ knows only the local information $x_{i,j}$ for $\forall j\neq i$. Without considering the cost, client $1$ and client $2$ may be selected to transmit the encoded packets $x_{1,2}+x_{1,3}$ and $x_{1,2}+x_{2,3}$, respectively, to complete the data exchange process. However, if we consider the cost given as $\delta_i$, selecting client $2$ and $3$ as the transmitters is a better choice than the former solution in terms of the total transmission cost.


\begin{figure}[h]
\begin{center}
\includegraphics[height=30mm,width=72mm]{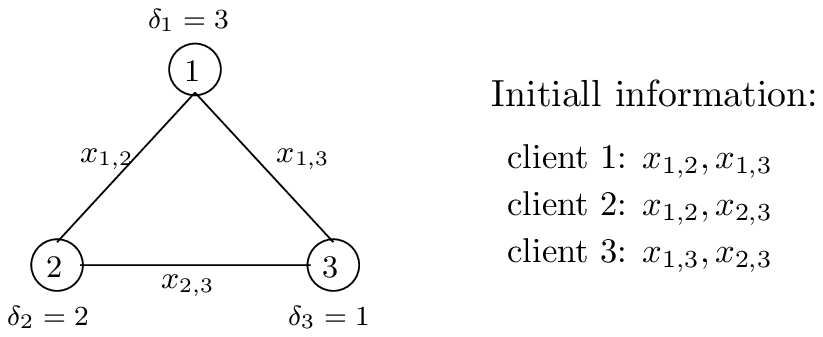}
\caption{Third-party information exchange among three clients}\label{Fig.thirdparty}
\end{center}
\end{figure}

In this paper, we design an algorithm to determine the number of packets that each client should send and how the packets should be encoded for each transmission, so as to minimize the total transmission cost required for the third-party information exchange problem. Similar to the previous works \cite{ElRouayheb2010, Sprintson2010, Love2007}, we consider there is a common control channel which allows reliable broadcast by any client to all the other clients. The main contributions of this paper can be summarized as follows:
\begin{itemize}
\item We derive a necessary and sufficient condition for a feasible transmission scheme such that there exists a code design for every client to successfully decode all the packets from other clients.
\item Based on the necessary and sufficient condition for feasible transmission, we formulate the problem of minimizing the total transmission cost as an integer linear programming.
\item Our analysis shows that the clients with lower transmission costs should send more packets than the clients with higher transmission costs.
\item We analyze the probability that every client can decode all other packets when random linear network coding is locally performed at each client.
\end{itemize}

The rest of the papers are organized as follows. The problem is formulated in Section~\ref{Sec.assumption}. Section~\ref{Sec.solution} derives the necessary and sufficient condition for a feasible transmission scheme. In Section~\ref{Sec.cost}, we give the optimal solution with the minimum transmission cost and analyze the performance with random network coding. We conclude the paper in Section~\ref{Sec.conclusion}.

\section{Problem Formulation}\label{Sec.assumption}

Consider a network with $N$ clients in $C=\{c_1,c_2,\cdots, c_N\}$, where each client $c_i\in C$ is associated with a transmission cost $\delta_i$ for sending a single packet.
Suppose that $x_{i,j}$ is the CSI (e.g., channel gain or link loss probability) of the link between client $c_i$ and client $c_j$. Initially, each client $c_i$ only knows the local CSI, i.e., client $c_i$ only holds the packets in $X_i=\{x_{i,j}|\forall j\in\{1,2,\cdots,N\}\setminus \{i\}\}$.
We assume that the links are symmetric, i.e., $x_{i,j}=x_{j,i}$ for $\forall i,j$. In other words, for every two clients $c_i$ and $c_{j}$, they hold one common packet $x_{i,j}$. Thus,
the set of all the packets is $X=\{x_{1,2}, \cdots, x_{1,N}, x_{2,3}, \cdots, x_{2,N}, \cdots, x_{N-1,N}\}$. Suppose that $K$ is the total number of the packets in the network, i.e., $K=|X|=\frac{N(N-1)}{2}$.



There is a lossless broadcast channel for clients to send or receive the packets \cite{ElRouayheb2010,Love2007,Sprintson2010,Milosavljevic2011,Ozgul2011,Tajbakhsh2011,MuxiYan2011}. Each transmitted packet is encoded over the packets initially held by the sender.
Let $y_i$ be the number of packets required to be transmitted by client $c_i$. The total transmission cost can thus be written as
\begin{eqnarray}\label{objective.cost}
\delta=\sum_{i=1}^N\delta_iy_i
\end{eqnarray}In this paper, our goal is to find a network coded transmission scheme that satisfies the following two conditions:
\begin{itemize}
\item Each client $c_i\in C$ can finally decode all the packets in $X$ from the packets sent by other clients via broadcast channel.
\item The total transmission cost $\delta$ defined in Eq.~(\ref{objective.cost}) is minimum among all the transmission schemes that satisfy the first condition.
\end{itemize}

Without loss of generality, we use $\overline{X_i}$ to denote the set of ``wanted" packets by client $c_i\in C$ , i.e., $\overline{X_i}=X\backslash X_i\subseteq X$. We also assume that the clients in $C$ are ordered with the non-decreasing order of the transmission cost, i.e., $\delta_{1}\leq \cdots \leq \delta_{N}$.

\section{Feasible transmission scheme}\label{Sec.solution}

Although the work in \cite{Love2007} already proposed a feasible transmission scheme, which can complete the third-party information exchange process with minimum transmissions, it is a special case of our problem, where it does not consider the transmission cost.
In this section, we aim to derive a necessary and sufficient condition for a feasible
transmission scheme, such that there exists a feasible code design for every client to successfully decode its ``wanted" packets. Then, based on the derived condition, we can give the transmission scheme to minimize the total transmission cost in Section~\ref{Sec.cost}.



\subsection{Encoding Matrix}
In this section, we define the encoding matrix of the transmitted packets.
Before sending the packets, each client first generates a linear encoded packet based on the packets it initially has over a finite field.
Then, the $k$-th encoded packet sent by client $c_i$ can be written as a linear combination of packets in $X_i$, i.e.,
\begin{eqnarray}
f^k_i=\sum_{j=1,j\neq i}^N \zeta^k_{i,j}x_{i,j}
\end{eqnarray}where $\zeta^k_{i,j}$ is the coefficient selected for packet $x_{i,j}$ by the $k$-th encoded packet of $c_i$ over finite field $GF(q)$.

The encoding vectors sent by all the clients can then be written as follows.
{\tiny\begin{eqnarray*}
E=\left(\begin{array}{ccccccccccc}
\zeta^1_{1,2} & \cdots &\zeta^{y_1}_{1,2} & \zeta^{1}_{2,1} & \cdots &\zeta^{y_2}_{2,1} & 0                 & \cdots & 0\\
\zeta^1_{1,3} & \cdots &\zeta^{y_1}_{1,3} & 0               & \cdots &0                 & \zeta^{1}_{3,1}   & \cdots &0\\
\cdot         & \cdots &\cdot             & \cdot           & \cdots &\cdot             & \cdot             & \cdots &\cdot\\
\cdot         & \cdots &\cdot             & \cdot           & \cdots &\cdot             & \cdot             & \cdots &\cdot\\
\cdot         & \cdots &\cdot             & \cdot           & \cdots &\cdot             & \cdot             & \cdots &\cdot\\
\zeta^1_{1,N} & \cdots & \zeta^{y_1}_{1,N}& 0               & \cdots &0                 &0                  & \cdots &\zeta^{y_N}_{N,1}\\
0             & \cdots & 0                & \zeta^{1}_{2,3} &\cdots  &\zeta^{y_2}_{2,3} &\zeta^{1}_{3,2}    & \cdots &0\\
0             & \cdots & 0                & \zeta^1_{2,4}   &\cdots  &\zeta^{y_2}_{2,4} & 0                 & \cdots &0\\
\cdot         & \cdots & \cdot            & \cdot           &\cdots  &\cdot             &\cdot              & \cdots &\cdot\\
\cdot         & \cdots & \cdot            & \cdot           &\cdots  &\cdot             &\cdot              & \cdots &\cdot\\
\cdot         & \cdots & \cdot            & \cdot           &\cdots  &\cdot             &\cdot              & \cdots &\cdot\\
0             & \cdots & 0                & \zeta^1_{2,N}   &\cdots  &\zeta^{y_2}_{2,N} & 0                 & \cdots &0\\
0             & \cdots & 0                & 0               &\cdots  &0                 &\zeta^{1}_{3,4}    & \cdots &\zeta^{y_N}_{N,2}\\
\cdot         & \cdots & \cdot            & \cdot           &\cdots  &\cdot             &\cdot              & \cdots &\cdot\\
\cdot         & \cdots & \cdot            & \cdot           &\cdots  &\cdot             &\cdot              & \cdots &\cdot\\
0             & \cdots & 0                & 0               &\cdots  &0                 &0                  & \cdots &\zeta^{y_N}_{N,N-1}\\
\end{array}\right)
\end{eqnarray*}}
\hspace{-0.08in}In the above encoding matrix $E$, each column vector denotes the encoding vector of a transmitted packet, and each row vector represents how a native packet is encoded in the transmitted packets. For example, the first column vector denotes the encoding vector of packet $f^1_1$ sent by client $1$, while in the first row vector, if the element is non-zero, it means the packet $x_{1,2}$ is participated in the encoded packet represented by that column.
Let $v^k_i$ be the encoding vector of the packet $f^k_i$, which is of size $\frac{N(N-1)}{2}$. For example, $v^1_1=(\zeta^1_{1,2},\zeta^1_{1,3},\cdots,\zeta^1_{1,N},0,\cdots,0)$.
Thus, $E=[v^1_1,\cdots, v^{y_1}_1,v^1_2,\cdots,v^{y_N}_N]^T$.




Without loss of generality, for each client, we define a local receiving matrix as follows.
\begin{DD}
The local {\em receiving matrix} of client $i$, named $R^i$, is defined as the sub-matrix of $E$, which includes almost all the rows and columns of $E$ except the followings:
\begin{itemize}
\item The rows, which represent the encoding status of native packets in $X_i$;
\item The columns, which represent the encoding vectors of packets sent by client $i$.
\end{itemize}
\end{DD}
Thus, a row vector of $R^i$ denotes how a ``wanted" packet of client $i$ is encoded in the received packets.

For example, $R^1$ does not include the first $N-1$ rows of $E$, as the first $N-1$ rows represent how the native packets in $X_1=\{x_{1,i'}|\forall i'\in\{2,\cdots,N\}\}$ participate in the received packets, and does not include the first $y_1$ columns of $E$, as these $y_1$ columns denotes the packets sent by client $1$.
Thus, $R^i$ is a $\frac{(N-1)(N-2)}{2}\times (\sum_{i'=1,i'\neq i}^Ny_{i'})$ matrix including the encoding vectors received by client $c_i$. We use $\beta_l$ to denote the $l$-th row vector of local receiving matrix $R^i$.

\subsection{Condition for a Receiving Matrix with Full Row Rank}
We first investigate the condition, under which there exists a code design for a receiving matrix with full row rank.

\begin{DD}
We define {\em coefficient element} as the element in a row encoding vector $v$, which is non-zero and is selected over $GF(q)$.
Let $Coef(v)$ be the set of columns in $v$ whose elements are coefficient elements.
\end{DD}
For example, $Coef(v^1_1)=\{1,2,3,\cdots,N-1\}$.

Let $R=(\beta_1, \beta_2,\cdots, \beta_m)^T$ be a general $m\times n$ receiving matrix, where $m\leq n$ and $\beta_i$ is the $i$-th row vector of $R$.
We then give the necessary and sufficient condition that there exists a code design to ensure the rank of $M$ is $m$ as follows.
\begin{lemma}\label{lemma_matrix}
There exists a code design such that the rank of the receiving matrix $R_{m\times n}$ is $m$, if and only if for any $r$ row vectors in $\{\beta_{i_1},\beta_{i_2},\cdots,\beta_{i_r}\}$, the size of $\bigcup_{j=1}^r Coef(\beta_{i_j})$ is at least $r$, where $1\leq i_j, r\leq m$.
\end{lemma}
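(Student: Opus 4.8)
The plan is to recognize the stated condition as Hall's marriage condition applied to the support pattern of $R$, and then to translate a matching into a nonzero determinant. Concretely, I would form the bipartite graph $G$ whose left vertices are the rows $\beta_1,\dots,\beta_m$ and whose right vertices are the columns $1,\dots,n$, joining $\beta_i$ to column $j$ exactly when $j\in Coef(\beta_i)$. The hypothesis that every $r$ rows satisfy $|\bigcup_{j=1}^r Coef(\beta_{i_j})|\ge r$ is precisely Hall's condition on the left side of $G$, so it is equivalent to the existence of a matching saturating all $m$ rows, i.e. a system of distinct representatives assigning to each row $\beta_i$ a distinct column $\sigma(i)\in Coef(\beta_i)$.

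For the (easy) necessity direction I would argue the contrapositive. If some collection of $r$ rows has $|\bigcup_{j=1}^r Coef(\beta_{i_j})|=t<r$, then all $r$ of these row vectors have their nonzero entries confined to the same $t$ columns, hence lie in a $t$-dimensional coordinate subspace; being $r>t$ vectors in a $t$-dimensional space, they are linearly dependent for every choice of the coefficient elements. Thus the rank can never reach $m$, so Hall's condition is necessary independently of the field.

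For sufficiency I would treat each coefficient element as an independent indeterminate, which is legitimate here because every nonzero position of the receiving matrix carries its own distinct coefficient $\zeta^k_{i,j}$ occurring in exactly one entry. Given the matching $\sigma$ supplied by Hall's theorem, I would consider the $m\times m$ submatrix $R'$ formed by the columns $\sigma(1),\dots,\sigma(m)$ and expand $\det R'$ by the Leibniz formula. The term coming from $\sigma$ itself is the squarefree monomial $\pm\prod_{i=1}^m R'_{i,\sigma(i)}$, a product of $m$ distinct indeterminates, and since each indeterminate occupies only one position of $R'$, no competing permutation avoiding the structural zeros can reproduce this monomial. Hence $\det R'$ is a nonzero polynomial in the coefficient variables, and it remains only to find an assignment over $GF(q)$ keeping it nonzero; as $\det R'$ has degree $m$, the Schwartz--Zippel lemma shows a random assignment succeeds with probability at least $1-m/q>0$ once $q>m$, so a feasible code design exists and $R$ attains rank $m$.

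The step I expect to require the most care is the non-cancellation argument in the Leibniz expansion: one must verify that the matching monomial genuinely survives, which hinges on each coefficient element being a free, independent variable rather than a repeated or constrained entry. I would make this precise by indexing variables by their matrix position and checking that distinct zero-avoiding permutations yield monomials supported on distinct position sets. The only place the field size enters is the passage from ``nonzero polynomial'' to ``nonzero over $GF(q)$'', handled uniformly by Schwartz--Zippel; the rest is the clean combinatorial equivalence with Hall's theorem.
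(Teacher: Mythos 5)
Your proof is correct and shares the paper's core idea---reduce full row rank to the existence of a transversal (system of distinct representatives) among the coefficient positions---but the execution differs at every step, mostly to your advantage. For necessity, the paper extracts $r$ coefficient elements in distinct rows and columns from the assumed full rank of every $r$-row submatrix; you instead argue the contrapositive by noting that $r$ rows supported on $t<r$ common columns lie in a $t$-dimensional coordinate subspace, which is cleaner and manifestly independent of how the coefficients are chosen. For the matching itself, the paper ``proves'' that Hall's condition yields an SDR by a greedy column-by-column selection; as written that argument is incomplete, since greedy choices can dead-end even when Hall's condition holds (rows with coefficient sets $\{1,2\},\{1,3\},\{1\}$ satisfy the condition, yet matching row $1$ to column $1$ and row $2$ to column $3$ strands row $3$), whereas your appeal to Hall's marriage theorem is rigorous. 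For sufficiency, the paper converts the matching into an explicit code: the matched coefficient element in each row is set nonzero and every other coefficient element is set to zero, so the resulting $m\times m$ submatrix has determinant equal to a product of $m$ nonzero entries; this works over any $GF(q)$. You instead keep all coefficients as independent indeterminates, show the matching monomial survives the Leibniz expansion, and invoke Schwartz--Zippel. That is also correct (each $\zeta^k_{i,j}$ indeed occupies exactly one position of the matrix, so distinct structural-zero-avoiding permutations produce distinct monomials and no cancellation occurs), and it has the side benefit of foreshadowing the paper's random-coding analysis in Theorem~\ref{Theorem.probability}; but it imposes an extraneous field-size requirement $q>m$ that the lemma does not assume, and your own matching already furnishes the field-size-free assignment (matched entries $1$, all other coefficient elements $0$), so the deterministic finish is preferable for this lemma.
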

\begin{proof}
We first prove the necessary condition, where we assume that there exists a code design such that the rank of the local receiving matrix $R_{m\times n}$ is $m$.

According to this assumption, for a matrix $R$, we can find at least a set of $m$ coefficient elements which are selected from different rows and different columns. In other words, for $m$ rows, the size of $\bigcup_{i=1}^m Coef(\beta_i)$ is $m$.

In addition, as the number of rows of matrix $R$ is $m$ and the rank of $R$ is $m$, each row vector should be linear independent with each other. Hence, it means for each sub-matrix of $R$ with $r$ rows, its rank is the number of rows it includes, i.e., $r$. So, in any $r$ rows $\{i_1,i_2,\cdots,i_r\}$, we can find at least a set of $r$ coefficient elements which are selected from $r$ different rows and $r$ different columns, i.e., $|\bigcup_{j=1}^r Coef(\beta_{i_j})|\geq r$. Thus, the necessary condition is proved.

We then prove the sufficient condition, where we assume that, for any $r$ row vectors of local receiving matrix $R$, $\{i_1,i_2,\cdots,i_r\}$, the size of $\bigcup_{j=1}^r Coef(\beta_{i_j})$ is at least $r$.

First, we consider the first row vector of $R$. There must be at least a coefficient element in row one, since $|Coef(\beta_{1})|\geq 1$. We select any of such columns, e.g., $l_1\in Coef(\beta_{1})$. Then, considering the second row vector of $R$, there must be at least a coefficient element, whose column number is not $l_1$, since $|Coef(\beta_{1})\bigcup Coef(\beta_{2})|\geq 2$. We then select such a column $l_2$ in $Coef(\beta_{1})\bigcup Coef(\beta_{2})$, where $l_2\neq l_1$. We repeat this process, and in each of the following rows, we will be able to find a coefficient element, whose column number has not been selected so far, since $|\bigcup_{j=1}^r Coef(\beta_{j})|\geq r$. Let $\{l_1,l_2,\cdots,l_m\}$ be the set of $m$ columns that have been selected.

Suppose that $R'_{m\times m}$ is a sub-matrix of $R$, where $R'$ includes $m$ column vectors of $R$ and the set of the indices of these $m$ columns is $\{l_1,l_2,\cdots,l_m\}$.

We can then design the feasible code as follows. Considering the elements in the $k$-th row vector of $R'$, only the coefficient element located in the $l_k$-th column is assigned with non-zero, while the other coefficient elements that are in other columns of row $k$ are assigned with zero.

According to the above coefficient assignment, the determinant of matrix $R'$ can be expressed as the product of $m$ non-zero elements from different rows and different columns, e.g., in the $k$-th row, the non-zero element located in column $l_k$ is selected. Since the determinant of matrix $R'$ is non-zero, the rank of $R'$ is thus $m$. Correspondingly, the rank of $R$ is also $m$. Thus, the sufficient condition is proved.

Hence, Lemma~\ref{lemma_matrix} is proved.
\end{proof}

\subsection{Necessary and Sufficient Condition for Feasible Transmission}\label{Sec.transmission.scheme}
In this section, we aim to find a feasible transmission scheme, such that there exists a code design for encoding matrix to ensure the ranks of all the local receiving matrices $R^i$s are full (i.e., $\frac{(N-1)(N-2)}{2}$), for $\forall i\in\{1,\cdots,N\}$.
To simplify the following presentation, we first define the following
\begin{DD}
Let $\{x_{i_1,i'_1},x_{i_2,i'_2},\cdots,x_{i_r,i'_r}\}$ be a $r$-subset of packets in $X$. We define $IDX( \{x_{i_1,i'_1},x_{i_2,i'_2},\cdots,x_{i_r,i'_r}\})$ as the indices set of the clients who hold at least one of packets in $\{x_{i_1,i'_1},x_{i_2,i'_2},\cdots,x_{i_r,i'_r}\}$.
\end{DD}
For example, for a $2$-subset $\{x_{1,2},x_{2,3}\}$, we can obtain that $IDX(\{x_{1,2},x_{2,3}\})=\{1,2,3\}$.

Before deriving the necessary and sufficient condition, we first prove the following lemma.
\begin{lemma}\label{lemma.kr}
For any $r$-subset $\{x_{i_1,i'_1},x_{i_2,i'_2},\cdots,x_{i_r,i'_r}\}$ of native packets in $X$, when $\binom{k-1}{2}+1\leq r\leq \binom{k}{2}$ for $\forall r,k\geq 1$, the size of $IDX( \{x_{i_1,i'_1},x_{i_2,i'_2},\cdots,x_{i_r,i'_r}\})$ is at least $k$.
\end{lemma}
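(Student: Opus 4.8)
The plan is to recognize that each native packet $x_{i,j}$ is held by exactly two clients, namely $c_i$ and $c_j$, so a packet behaves like an (undirected) edge on the client set $\{1,\dots,N\}$, and $IDX(\cdot)$ is precisely the set of endpoints spanned by the chosen ``edges''. Under this viewpoint the claim becomes a purely combinatorial statement about how few vertices a given number of distinct edges can touch. First I would fix an arbitrary $r$-subset $S=\{x_{i_1,i'_1},\dots,x_{i_r,i'_r}\}$ of $X$ and write $m=|IDX(S)|$ for the number of clients holding at least one packet of $S$.

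The central step is to establish the upper bound $r\le\binom{m}{2}$. Since every packet in $S$ has both of its two holders inside the index set $IDX(S)$ of size $m$, and distinct packets correspond to distinct unordered pairs of indices, the $r$ packets are $r$ distinct elements of the collection of all unordered pairs drawn from these $m$ clients. That collection has size $\binom{m}{2}$, which yields $r\le\binom{m}{2}$ at once.

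It then remains to combine this bound with the hypothesis $\binom{k-1}{2}+1\le r$. I would argue by contraposition: suppose $m\le k-1$. Because $\binom{t}{2}=\frac{t(t-1)}{2}$ is non-decreasing in $t$ for integers $t\ge0$, the assumption $m\le k-1$ forces $\binom{m}{2}\le\binom{k-1}{2}$, and hence $r\le\binom{m}{2}\le\binom{k-1}{2}$. This contradicts $r\ge\binom{k-1}{2}+1$, so in fact $m\ge k$, which is exactly the desired conclusion $|IDX(S)|\ge k$.

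I do not expect a genuine obstacle in this argument; the only place that requires a little care is setting up the edge/vertex correspondence correctly, namely verifying that each packet contributes exactly two client indices and that distinct packets yield distinct pairs, so that the counting bound $r\le\binom{m}{2}$ is airtight. Once that is in place, the monotonicity of $\binom{\cdot}{2}$ immediately closes the argument. A minor boundary check is that $r\ge1$ already guarantees $m\ge 2$, so the monotonicity step never degenerates.
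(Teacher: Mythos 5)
Your proof is correct and rests on the same counting fact as the paper's: a set of $m$ clients can jointly hold at most $\binom{m}{2}$ distinct packets (your edge/vertex correspondence), so $r\geq\binom{k-1}{2}+1$ forces $|IDX(S)|\geq k$. The only difference is packaging --- the paper splits into the cases $r=\binom{k-1}{2}+1$, $r=\binom{k}{2}$, and values in between, whereas your single contrapositive argument handles the whole range uniformly and makes explicit that the upper bound $r\leq\binom{k}{2}$ is never actually needed; this is a cleaner write-up of the same idea, not a different route.
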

\begin{proof}
Firstly, we consider the case when $r=\binom{k-1}{2}+1$. We can easily obtain that more than $k-1$ clients involve in the defined set. This is because, for any $k-1$ clients, the number of packets held by them, but not held by any other client is at most $\binom{k-1}{2}$. Thus, for $r=\binom{k-1}{2}+1$ packets, we still need at least another one client to include the extra packet. In other words, at least $k$ clients are needed, i.e., $|IDX(\{x_{i_1,i'_1},x_{i_2,i'_2},\cdots,x_{i_r,i'_r}\})|\geq k$.

We then consider $r=\binom{k}{2}$. As in the above case, more than $k-1$ clients involve in the defined set.
The worst case is that $r$ packets are only held by $k$ clients but not held by any other clients, e.g., packets in $\{x_{i_1,i_2},\cdots,x_{i_1,i_k},x_{i_2,i_3},\cdots,x_{i_2,i_k},\cdots,x_{i_{k-1},i_k}\}$ are only held by clients in $\{c_{i_1},c_{i_2},\cdots,c_{i_k}\}$.
In this case, only $k$ clients can involve these $r$ packets in their encoded packets, i.e., $|IDX(\{x_{i_1,i'_1},x_{i_2,i'_2},\cdots,x_{i_r,i'_r}\})|=k$.

When $\binom{k-1}{2}+1< r<\binom{k}{2}$, we can also similarly prove that at least $k$ clients are needed, by just considering $r$ packets in $\{x_{i_1,i_2},\cdots,x_{i_1,i_k},x_{i_2,i_3},\cdots,x_{i_2,i_k},\cdots,x_{i_{k-1},i_k}\}$.

Hence, the lemma is proved.

%
%

\end{proof}

Based on the above Lemmas, we then discuss the necessary and sufficient condition of the feasible transmission scheme for our third-party information exchange problem.
\begin{Theorem}\label{Theorem.sufficient}
For any client in $C$, there exists a code design such that it can decode all its ``wanted" packets, if and only if the total number of packets that any $k$ clients send is at least $\binom{k}{2}$. That is
\begin{eqnarray}
\sum_{t=1}^ky_{i_t}\geq \binom{k}{2},\forall 1\leq k< N
\end{eqnarray}where $\forall \{i_{1},i_{2},\cdots,i_{k}\}\subseteq\{1,2,\cdots,N\}$ and $i_{t}\neq i_{t'}$.
\end{Theorem}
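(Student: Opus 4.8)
The plan is to reduce, for each client $c_i$, the requirement of decoding all of $\overline{X_i}$ to the statement that its local receiving matrix $R^i$ attains full row rank $\frac{(N-1)(N-2)}{2}$, apply Lemma~\ref{lemma_matrix} to convert this into a Hall-type condition on the supports of the rows of $R^i$, and finally use Lemma~\ref{lemma.kr} to rewrite that condition in terms of the transmission numbers $y_{i'}$. The key observation in the reduction is that the row of $R^i$ indexed by a wanted packet $x_{j,j'}$ (with $j,j'\neq i$) can only have nonzero entries in the columns corresponding to packets sent by $c_j$ or $c_{j'}$, since these are the only two clients that hold $x_{j,j'}$. Hence for an $r$-subset $S\subseteq\overline{X_i}$ the union $\bigcup Coef(\beta)$ over the corresponding rows is precisely the set of columns transmitted by the clients in $IDX(S)$, so its cardinality equals $\sum_{i'\in IDX(S)}y_{i'}$. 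By Lemma~\ref{lemma_matrix}, a code making $R^i$ full rank exists if and only if $\sum_{i'\in IDX(S)}y_{i'}\ge r$ for every $r$-subset $S\subseteq\overline{X_i}$, and I would record this reformulated condition for every $i$.

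For the ``if'' direction I would assume the stated inequalities and verify this reformulated condition. Given $c_i$ and an $r$-subset $S\subseteq\overline{X_i}$, I take the integer $k$ with $\binom{k-1}{2}+1\le r\le\binom{k}{2}$, so that Lemma~\ref{lemma.kr} yields $|IDX(S)|\ge k$. Writing $J=IDX(S)$, I note $J\subseteq\{1,\dots,N\}\setminus\{i\}$, hence $|J|\le N-1<N$ and the hypothesis applies to the $|J|$-subset $J$. Chaining the inequalities gives $\sum_{i'\in J}y_{i'}\ge\binom{|J|}{2}\ge\binom{k}{2}\ge r$, which is exactly what the reformulated condition demands; Lemma~\ref{lemma_matrix} then provides a feasible code for $c_i$. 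Since this holds for all $i$, a single global assignment making every $R^i$ full rank simultaneously exists over a sufficiently large field by a standard genericity argument (the assignments destroying full rank of any given $R^i$ form a proper subvariety, and a finite union of these is still proper), consistent with the random-coding discussion made later.

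For the ``only if'' direction I would fix a $k$-subset $\{i_1,\dots,i_k\}$ with $k<N$ (the case $k=1$ is vacuous since $\binom{1}{2}=0$) and exhibit one witnessing packet set. Because $k<N$ there exists an index $i\notin\{i_1,\dots,i_k\}$, and I set $S=\{x_{i_s,i_t}:1\le s<t\le k\}$, the $\binom{k}{2}$ packets whose both endpoints lie in $\{i_1,\dots,i_k\}$. Then $IDX(S)=\{i_1,\dots,i_k\}$ and, since none of these packets carries the index $i$, $S\subseteq\overline{X_i}$. Feasibility of the scheme forces $R^i$ to be full rank, so the reformulated condition applied to this $r=\binom{k}{2}$-subset gives $\sum_{t=1}^k y_{i_t}=\sum_{i'\in IDX(S)}y_{i'}\ge\binom{k}{2}$, which is the claimed inequality.

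I expect the main obstacle to be the reduction step: correctly pinning down the support pattern of $R^i$ and proving that the union of the $Coef$ sets over a family of rows counts exactly the transmissions of the clients in $IDX(S)$, so that Lemma~\ref{lemma_matrix} and Lemma~\ref{lemma.kr} can be combined. A secondary point requiring care is the passage from ``each $R^i$ can individually be made full rank'' to ``one code design works for all clients at once,'' which relies on the genericity/large-field argument rather than on Lemma~\ref{lemma_matrix} alone.
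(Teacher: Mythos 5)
Your proposal is correct and follows essentially the same route as the paper: reduce decodability to full row rank of each local receiving matrix $R^i$, apply Lemma~\ref{lemma_matrix} to get the Hall-type condition on row supports, and translate that condition via Lemma~\ref{lemma.kr} together with the observation that the row of a wanted packet $x_{j,j'}$ has coefficient elements exactly in the columns sent by $c_j$ and $c_{j'}$, so the union of $Coef$ sets has size $\sum_{i'\in IDX(S)}y_{i'}$. If anything, your write-up is tighter than the paper's in a few spots --- explicitly choosing $i\notin\{i_1,\dots,i_k\}$ in the necessity direction (which is why the condition is only required for $k<N$), chaining $\sum_{i'\in IDX(S)}y_{i'}\ge\binom{|IDX(S)|}{2}\ge\binom{k}{2}\ge r$ rather than the paper's looser step, and adding the genericity argument for one code that works for all clients simultaneously, a point the paper defers to its random-coding analysis.
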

\begin{proof}
To guarantee that client $c_{j}\in C$ can eventually decode its ``wanted" packets in $\overline{X_{j}}$, the rank of its local receiving matrix $R^{j}$ should be $\frac{(N-1)(N-2)}{2}$.


We first prove the necessary condition, where we assume that after receiving $y_1,\cdots,y_{j-1},y_{j+1},\cdots,y_N$ packets from clients $1,\cdots,j-1,j+1,\cdots,N$ respectively, there exists a code design such that client $j$ can decode its ``wanted" packets. In other words, there exists a code design such that the rank of matrix $R^{j}$ is $\frac{(N-1)(N-2)}{2}$.




According to Lemma~\ref{lemma_matrix}, to guarantee the rank of $R^{j}$ is $\frac{(N-1)(N-2)}{2}$, for any $r$ row vectors of $R^{j}$, we must have
\begin{eqnarray}\label{eq.i1}
|\bigcup_{i=1}^r Coef(\beta_{l_i})|\geq r
 \end{eqnarray}Note that each row vector denotes how a native packet is participated in the received encoded packets. In other words, $r$ row vectors represent $r$ native packets to participate in the encoded packets. According to Lemma~\ref{lemma.kr}, for any $r$-subset packets in $\{x_{i_1,i'_1},x_{i_2,i'_2},\cdots,x_{i_r,i'_r}\}$, we have $|IDX( \{x_{i_1,i'_1},x_{i_2,i'_2},\cdots,x_{i_r,i'_r}\})|\geq k$, when $\binom{k-1}{2}+1\leq r\leq \binom{k}{2}$.
In the worst case, for a $r$-subset of packets, e.g., $\{x_{i_1,i_2},\cdots,x_{i_1,i_k},x_{i_2,i_3},\cdots,x_{i_2,i_k},\cdots,x_{i_{k-1},i_k}\}$, we have {\small$|IDX(\{x_{i_1,i_2},\cdots,x_{i_1,i_k},x_{i_2,i_3},\cdots,x_{i_2,i_k},\cdots,x_{i_{k-1},i_k}\})|=k$}, where $r=\binom{k}{2}$.
That is, only $k$ clients can encode the packets in this $r$-subset into their encoded packets. Let $l_{t}$ be the index of the row vector that represents how the native packet $x_{i_{t},i'_{t}}$ in the above $r$-subset is participated in the received packets. Thus, $|\bigcup_{t=1}^r Coef(\beta_{l_{t}})|=\sum_{t=1}^ky_{i_t}$. According to Eq.~(\ref{eq.i1}), we have \begin{eqnarray}\label{eq.i2}\sum_{t=1}^ky_{i_t}\geq r=\binom{k}{2}\end{eqnarray}
which thus proves the necessary condition.

We then prove the sufficient condition, where we assume that for any $k$ clients, the total number of packets they send is at least $\binom{k}{2}$, which means $\sum_{t=1}^ky_{i_t}\geq \binom{k}{2}$, where $i_t\in \{1,\cdots, N\}$.

According to Lemma~\ref{lemma.kr}, we can obtain that for any $r$ native packets, at least $k$ clients (e.g., $\{i_1,i_2,\cdots,i_k\}$) can encode them in their sending packets, where $\binom{k-1}{2}+1\leq r\leq \binom{k}{2}$. Thus, for these $r$ rows, we can obtain that
\begin{eqnarray}
|\bigcup_{t=1}^r Coef(\beta_{l_t})|\geq \sum_{t=1}^ky_{i_t}
\end{eqnarray}
where $\{l_1,l_2\cdots,l_r\}$ is supposed to be the indices set of the row vectors representing the encoding status of these $r$ native packets.

According to the assumption, we have
\begin{eqnarray}
|\bigcup_{t=1}^r Coef(\beta_{l_t})|\geq \binom{k}{2}\geq r
\end{eqnarray}
In addition, since $\sum_{t=1}^ky_{i_t}\geq \binom{k}{2}$, we can obtain that
{\small\begin{eqnarray}
\sum_{i=1,i\neq j}^Ny_i\geq \binom{N-1}{2}=\frac{(N-1)(N-2)}{2}\end{eqnarray}}which means, the row number of $R^j$ is less than the column number of $R^j$.

Thus, the size of $\bigcup_{t=1}^r Coef(\beta_{l_t})$ is at least $r$, if for any $k$ clients, the total number of packets they sent is at least $\binom{k}{2}$. According to Lemma~\ref{lemma_matrix}, we can obtain that $R^j$ is with full row rank $\frac{(N-1)(N-2)}{2}$, which thus proves the sufficient condition.

Thus, we complete the proof of Theorem~\ref{Theorem.sufficient}.
\end{proof}

\section{Transmission Scheme with Minimum Transmission Cost}\label{Sec.cost}
In this section, we first formulate the problem of minimizing the total transmission cost as an integer linear programming. Based on the proposed transmission scheme, we analyze the performance that can be achieved with random linear network coding over $GF(q)$.

\subsection{Transmission Scheme with Minimum Cost}
Based on Section~\ref{Sec.transmission.scheme}, we can formulate the problem of minimizing the total transmission cost such that all clients can decode their ``wanted" packets, as an Integer Linear Programming (ILP) as follows.
\begin{eqnarray}\label{objective}
\min \sum_{i=1}^N\delta_iy_i
\end{eqnarray}
subject to
\begin{eqnarray}\label{constraint.1}
\sum_{t=1}^k y_{i_t}\geq \binom{k}{2}, \forall i_t\in\{1,\cdots,N\}, 1\leq k<N
\end{eqnarray}
Based on the above ILP, we can obtain the transmission scheme with the minimum total transmission cost.

We also prove the following theorem, which can be used to further simplify Constraint (\ref{constraint.1}) of the ILP.
\begin{Theorem}\label{Corollary.1}
Suppose that $\{y_1,y_2,\cdots,y_N\}$ is the optimal transmission scheme with the minimum total transmission cost.
We must have $y_1\geq y_2\geq \cdots\geq y_N$, when it is assumed that $\delta_{1}\leq \delta_2 \leq \cdots \leq \delta_{N}$.
\end{Theorem}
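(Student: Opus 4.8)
The plan is to argue by an \emph{exchange (rearrangement) argument}, exploiting the fact that the feasibility region cut out by Constraint~(\ref{constraint.1}) is invariant under permutations of the coordinates $y_1,\ldots,y_N$. Once this symmetry is in hand, the statement becomes an instance of the classical rearrangement inequality: to minimize the weighted sum $\sum_i\delta_i y_i$ over a permutation-symmetric feasible set, with the weights $\delta_i$ listed in non-decreasing order, one is forced to list the $y_i$ in non-increasing order.

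First I would verify the permutation-invariance of the constraints. For a fixed $k$, the family of inequalities $\sum_{t=1}^{k}y_{i_t}\geq\binom{k}{2}$ ranging over all $k$-subsets $\{i_1,\ldots,i_k\}$ is equivalent to the single demand that the sum of the $k$ \emph{smallest} values among $y_1,\ldots,y_N$ be at least $\binom{k}{2}$, since the smallest $k$-subset sum is attained on the $k$ smallest coordinates. This reformulation depends only on the multiset $\{y_1,\ldots,y_N\}$ and not on the labelling, so any permutation of a feasible vector is again feasible; permuting integer entries also preserves integrality, hence keeps it a valid ILP point.

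Next I would run the exchange step by contradiction. Suppose $(y_1,\ldots,y_N)$ is optimal yet $y_i<y_j$ for some $i<j$, so that $\delta_i\leq\delta_j$. Interchanging the $i$-th and $j$-th entries yields, by the symmetry above, another feasible vector whose cost changes by
\begin{eqnarray*}
\Delta=(\delta_i y_j+\delta_j y_i)-(\delta_i y_i+\delta_j y_j)=(\delta_i-\delta_j)(y_j-y_i).
\end{eqnarray*}
Since $\delta_i-\delta_j\leq 0$ and $y_j-y_i>0$, we have $\Delta\leq 0$; when $\delta_i<\delta_j$ this is a strict decrease, contradicting optimality. Hence no out-of-order pair with strictly increasing cost can survive in an optimal scheme.

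The delicate point — and the main obstacle — is the \emph{tie case} $\delta_i=\delta_j$, where $\Delta=0$ and the swap alone does not forbid an unsorted optimum. I would settle this by observing that such swaps never raise the cost, so bubble-sorting the coordinates of any optimal solution into non-increasing order (each adjacent swap has $\delta_i\leq\delta_{i+1}$, hence $\Delta\leq 0$) terminates after finitely many steps at an optimal solution with $y_1\geq y_2\geq\cdots\geq y_N$. This is exactly the asserted ordering, and it is what permits replacing the exponentially many subset inequalities of Constraint~(\ref{constraint.1}) by the $N-1$ binding inequalities, one for each $k=1,\ldots,N-1$, coming from the $k$ clients that transmit the fewest packets.
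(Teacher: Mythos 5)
The paper provides no proof to compare against---its proof of this theorem reads, in full, ``We omit the proof due to its simplicity''---so your exchange argument stands on its own, and it is correct. All three of your steps are sound: the feasible region of Constraint~(\ref{constraint.1}) is permutation-symmetric (indeed, relabelling coordinates merely permutes the $k$-subset constraints among themselves, so your reformulation via the $k$ smallest coordinates, while valid, is not even strictly needed); the swap computation $\Delta=(\delta_i-\delta_j)(y_j-y_i)\leq 0$ is right; and the bubble-sort closure handles the degenerate case. Your proof supplies exactly the justification the paper skips, and it is also precisely what the subsequent Corollary needs, since restricting the ILP to sorted vectors via Constraints~(\ref{constraint.2}) and~(\ref{constraint3}) is legitimate as soon as at least one optimum of the original ILP is sorted.

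One point deserves emphasis: your tie-case discussion is not a technical afterthought---it exposes that the theorem as stated is actually too strong. When costs tie, an optimal solution need not be sorted. For $N=3$ with $\delta_1=\delta_2=\delta_3$, the constraints are $y_i\geq 0$ and $y_i+y_j\geq\binom{2}{2}=1$ for every pair, so both $(1,1,0)$ and $(0,1,1)$ are optimal with cost $2$, and the latter violates $y_1\geq y_2\geq y_3$. What is true, and what your argument proves, is the existence version: \emph{some} optimal scheme satisfies $y_1\geq y_2\geq\cdots\geq y_N$ (equivalently, every optimum is sorted whenever the $\delta_i$ are strictly increasing). Since the existence version is all the Corollary requires, your proof both fills the gap left by the paper and quietly corrects the statement to the form in which it actually holds.
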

\begin{proof}
We omit the proof due to its simplicity.
\end{proof}

Based on the above theorem, we can conclude that the client with lower transmission cost needs to transmit more packets than the client with higher transmission cost.

\begin{Corollary}
The constraint (\ref{constraint.1}) of ILP can be reduced to
\begin{eqnarray}
\sum_{i=1}^{k}y_{N-i+1}&\geq& \binom{k}{2}, \forall k\in \{1,2,\cdots,N-1\}\label{constraint.2}\\
y_{i-1}&\geq& y_i,\forall k\in \{2,3,\cdots,N\}\label{constraint3}
\end{eqnarray}
when it is assumed that $\delta_1\leq \delta_2\leq\cdots\delta_N$.
\end{Corollary}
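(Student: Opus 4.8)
The plan is to show that, for the purpose of the minimization in (\ref{objective}), the full family of subset constraints (\ref{constraint.1}) and the reduced system (\ref{constraint.2})--(\ref{constraint3}) single out the same optimum, and I would organize this as two reductions. First, I would invoke Theorem~\ref{Corollary.1}: since every optimal solution of the ILP already satisfies $y_1\geq y_2\geq\cdots\geq y_N$, appending the monotonicity inequalities (\ref{constraint3}) to the program cannot remove any optimizer, so it leaves both the optimal value and at least one optimal point unchanged. It therefore suffices to argue that, once monotonicity is assumed, the exponentially many inequalities of (\ref{constraint.1}) collapse to the $N-1$ inequalities of (\ref{constraint.2}).

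Second, I would establish the equivalence of the two constraint families on the monotone region. The forward direction is immediate: (\ref{constraint.2}) is exactly (\ref{constraint.1}) specialized to the subset $\{N-k+1,N-k+2,\ldots,N\}$, so it is already one member of the original family. For the reverse direction, I would fix any $k$ and any $k$-subset $\{i_1,\ldots,i_k\}$. Because $y_1\geq\cdots\geq y_N$, the $k$ smallest of the values $y_1,\ldots,y_N$ are precisely $y_{N},y_{N-1},\ldots,y_{N-k+1}$; hence over all $k$-subsets the sum $\sum_{t=1}^k y_{i_t}$ is minimized by this trailing block, giving $\sum_{t=1}^k y_{i_t}\geq\sum_{i=1}^k y_{N-i+1}$. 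Chaining this with (\ref{constraint.2}) yields $\sum_{t=1}^k y_{i_t}\geq\binom{k}{2}$, which is exactly the general inequality (\ref{constraint.1}). Thus, on the monotone region, (\ref{constraint.2}) alone implies the entire family (\ref{constraint.1}).

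Combining the two reductions, the monotone feasible points of (\ref{constraint.1}) coincide with the feasible points of (\ref{constraint.2})--(\ref{constraint3}); since Theorem~\ref{Corollary.1} guarantees an optimizer inside the monotone region, the reduced program attains the same minimum cost through the same optimal transmission scheme. The only step demanding genuine care is the legitimacy of discarding the non-monotone feasible points: this is \emph{not} a claim that the two feasible regions are equal (they are not, since a non-monotone point may satisfy (\ref{constraint.1}) yet violate (\ref{constraint3})), but rather that the optimum is always attained in the monotone part, which is exactly what Theorem~\ref{Corollary.1} supplies. The remaining ``smallest-subset'' step is the routine rearrangement observation that a sum of $k$ entries drawn from a sorted list is smallest when the trailing $k$ entries are chosen.
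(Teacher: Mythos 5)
Your proposal is correct and follows essentially the same route as the paper: both rest on the rearrangement observation that, under $y_1\geq y_2\geq\cdots\geq y_N$, any $k$-subset sum is bounded below by the trailing sum $\sum_{i=1}^{k}y_{N-i+1}$, so (\ref{constraint.2}) together with (\ref{constraint3}) implies the full family (\ref{constraint.1}). You are in fact somewhat more careful than the paper, which proves only this one implication and leaves implicit the step you spell out explicitly — that the feasible regions are not equal, but Theorem~\ref{Corollary.1} guarantees an optimizer in the monotone region, so the restriction preserves the optimal value.
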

\begin{proof}
We now prove that with the Constraint (\ref{constraint.2}) and (\ref{constraint3}), Constraint (\ref{constraint.1}) can also be satisfied.

For any $k<N$, with constraint (\ref{constraint.2}), we can obtain that $y_{k-1}+y_{k}+\cdots,y_N\geq \binom{k}{2}$. From Theorem~\ref{Corollary.1}, we can easily obtain the constraint (\ref{constraint3}), i.e., $y_1\geq y_2\geq\cdots\geq y_N$. Thus, for any $k$ clients, the total number of packets they send must be no less than $y_{k-1}+y_{k}+\cdots,y_N$. That is, for any $\{i_1,i_2,\cdots,i_k\}\subseteq \{1,2,\cdots,N\}$, we have
\begin{eqnarray}
y_{i_1}+y_{i_2}+\cdots+y_{i_k}&\geq& y_{k-1}+y_{k}+\cdots+y_N\notag\\
&\geq& \binom{k}{2}
\end{eqnarray}where $i_1\neq i_2\neq \cdots\neq i_k$.

From the above equation, we can obtain that for any $k$ clients, where $1\leq k< N$, the total number of packets they need to send is at least $\binom{k}{2}$, which thus proved the above Corollary.
\end{proof}

\subsection{Illustration with an Example}
We consider a network with four clients as an example. Suppose that the transmission cost at each client is $\delta_1=1,\delta_2=2,\delta_3=3,\delta_4=4$. As shown in Fig.~\ref{Fig.table}, with our transmission scheme, the total transmission cost is $9$. On contrary, with the transmission scheme proposed in \cite{Love2007}, which aims to minimize the total number of transmissions, the transmission cost is $10$. In addition, we can easily check that with the code design in our scheme each client can decode its ``wanted" packets. Fig.~\ref{Fig.table} also verifies the result given in Theorem \ref{Corollary.1}, i.e., the clients with lower transmission costs should send more packets than the clients with higher transmission costs.

\begin{figure}[h]
\begin{center}
\includegraphics[height=36mm,width=85mm]{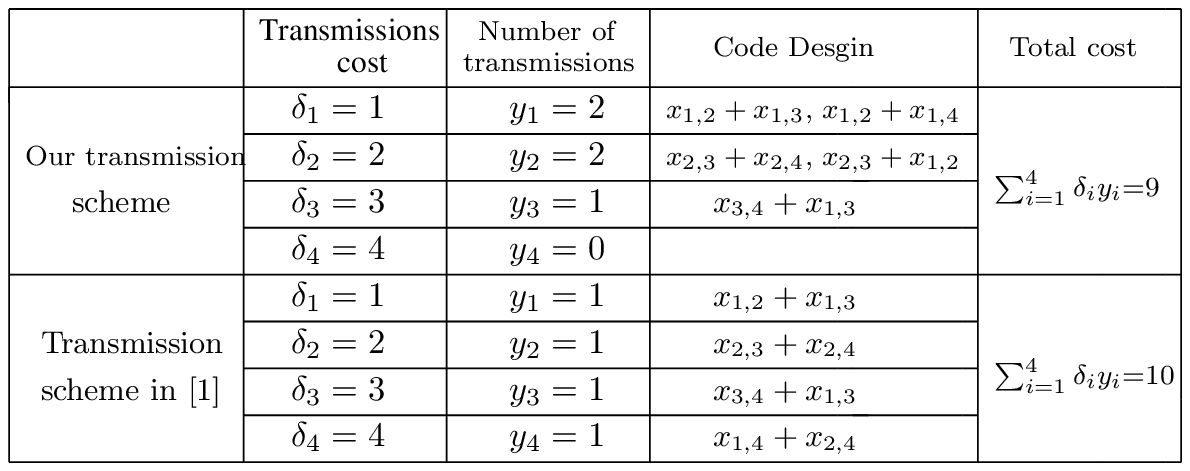}
\caption{Different transmission schemes with different total transmission costs.}\label{Fig.table}
\end{center}
\end{figure}

\subsection{Performance Analysis with Random Network Coding}

With the ILP in Eq.~(\ref{objective}) and Constraints~(\ref{constraint.2})~(\ref{constraint3}), we can obtain the optimal number of packets each client should send, so as to minimize the total transmission cost. To guarantee that each client $i$ can finally decode its ``wanted" packets with matrix $R^i$, we can design a deterministic code as introduced in Lemma~\ref{lemma_matrix}. However, the deterministic encoding matrix needs to be centrally designed, which may incur high overhead. Instead, we use random linear network coding at each client to locally determine the encoding vectors of the packets it sends.

We let each client $c_i\in C$ locally conduct random linear network coding over the packets that it initially has, where the number of encoded packets that each client should generate is determined by the ILP given in the above section.

Before analyzing further result, we introduce the
following Schwartz-Zippel Lemma~\cite{Motwani1995}.
\begin{lemma}\label{sch-zip}
(Schwartz-Zippel lemma~\cite{Motwani1995}) Let $P(z_1,\cdots,z_N)$ be a non-zero polynomial of degree $d\geq 0$ over a field $\mathbb F$. Let $S$ be a finite subset of $\mathbb F$, and the value of each $z_1,z_2,\cdots,z_N$ be selected independently and uniformly at random from $S$. Then the probability that the polynomial equals zero is at most $\frac{d}{|S|}$, i.e.,
$\Pr(P(z_1,\cdots,z_N)=0)\leq\frac{d}{|S|}$.
\end{lemma}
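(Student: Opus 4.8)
The plan is to prove the lemma by induction on the number of variables $N$, which is the standard route for this classical result.

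For the base case $N=1$, I would invoke the fact that a nonzero univariate polynomial of degree $d$ over a field has at most $d$ roots. Since $z_1$ is drawn uniformly from $S$, at most $d$ of the $|S|$ possible values can make $P$ vanish, and hence $\Pr(P(z_1)=0)\le d/|S|$. (When $d=0$ the polynomial is a nonzero constant, so the probability is $0$ and the bound holds trivially.)

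For the inductive step, I assume the bound holds for every nonzero polynomial in fewer than $N$ variables. I would single out one variable, say $z_1$, and write
\[
P(z_1,\ldots,z_N)=\sum_{i=0}^{k} z_1^{\,i}\,Q_i(z_2,\ldots,z_N),
\]
where $k$ is the largest power of $z_1$ that actually occurs, so the leading coefficient $Q_k$ is a nonzero polynomial in $z_2,\ldots,z_N$ of degree at most $d-k$. The key idea is to condition on the values drawn for $z_2,\ldots,z_N$ and split into two events. Let $A$ be the event $Q_k(z_2,\ldots,z_N)=0$ and let $B$ be the event $P=0$. A union bound combined with the law of total probability gives
\[
\Pr(B)\le\Pr(A)+\Pr(B\mid \overline{A}).
\]
By the induction hypothesis applied to the $(N-1)$-variable polynomial $Q_k$, we get $\Pr(A)\le (d-k)/|S|$. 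On the complementary event $\overline{A}$ the leading coefficient in $z_1$ is nonzero, so $P$ regarded as a polynomial in the single variable $z_1$ is nonzero of degree exactly $k$; the base case then yields $\Pr(B\mid \overline{A})\le k/|S|$. Adding the two contributions gives $\Pr(B)\le (d-k)/|S|+k/|S|=d/|S|$, which closes the induction.

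The main obstacle I anticipate is the degree bookkeeping in the inductive step: one must arrange the splitting so that the degree of the leading coefficient $Q_k$ and the residual degree $k$ in $z_1$ sum back to at most $d$, and one must verify that $Q_k$ is genuinely nonzero so that the induction hypothesis is applicable to it. Once the decomposition is set up correctly, the probabilistic combination is a routine application of the law of total probability together with a union bound over the two cases.
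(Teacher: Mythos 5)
Your proof is correct, but note that the paper itself gives no proof of this statement at all: it is the classical Schwartz--Zippel lemma, imported verbatim with a citation to Motwani and Raghavan's \emph{Randomized Algorithms}, and is used in the paper only as a black box to bound the failure probability of random linear network coding in Theorem~3. What you have written is essentially the standard textbook induction (the one found in the cited reference): isolate the highest power $k$ of $z_1$, apply the induction hypothesis to the nonzero leading coefficient $Q_k$ of degree at most $d-k$, and use the univariate root bound for the conditional event. Your degree bookkeeping is right, and the edge cases ($d=0$, or $k=0$ so that $P$ does not involve $z_1$) go through. One small technicality worth smoothing: writing $\Pr(B)\le\Pr(A)+\Pr(B\mid\overline{A})$ presupposes $\Pr(\overline{A})>0$; this is harmless because when $d\ge|S|$ the claimed bound is vacuous, and when $d<|S|$ the induction hypothesis gives $\Pr(A)<1$, or you can avoid conditioning entirely by bounding $\Pr(B\cap\overline{A})$ directly, summing over the assignments of $z_2,\ldots,z_N$ for which $Q_k\neq 0$ and using the independence of $z_1$ from those assignments. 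With that remark, your argument is a complete proof of a statement the paper leaves unproved.
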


Based on the above lemma, we can derive the following probability.
\begin{Theorem}\label{Theorem.probability}
With random linear network coding and the transmission scheme $\{y_1,y_2,\cdots,y_N\}$ obtained by ILP, the probability that each client $i\in\{1,2,\cdots,N\}$ can finally decode its ``wanted" packets is at least
\begin{eqnarray}\label{Eq.probability}
1-\frac{(N-1)(N-2)}{2q}
\end{eqnarray} where $q$ is the field size.
\end{Theorem}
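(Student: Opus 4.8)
The plan is to reduce the event ``client $i$ decodes its wanted packets'' to a single algebraic event and then control that event with the Schwartz--Zippel lemma (Lemma~\ref{sch-zip}). Recall that client $i$ recovers all of $\overline{X_i}$ exactly when its local receiving matrix $R^i$ attains full row rank $m:=\frac{(N-1)(N-2)}{2}$. When every client performs random linear network coding, the entries of $R^i$ are precisely the random coefficients $\zeta^k_{i',j}$ (or $0$), each drawn independently and uniformly from $GF(q)$, so ``$R^i$ is rank-deficient'' is a polynomial event in these variables.

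First I would fix a client $i$ and exhibit a single $m\times m$ submatrix $M_i$ of $R^i$ whose determinant, viewed as a formal polynomial in the coefficient variables, is not identically zero. This is exactly what Theorem~\ref{Theorem.sufficient} buys us: since the transmission numbers $\{y_1,\dots,y_N\}$ returned by the ILP satisfy Constraint~(\ref{constraint.1}), the combinatorial hypothesis of Lemma~\ref{lemma_matrix} holds for $R^i$, and the sufficiency construction in that lemma selects $m$ coefficient elements lying in $m$ distinct rows and $m$ distinct columns. Those columns determine $M_i$, and the construction assigns values in $GF(q)$ making the corresponding diagonal product nonzero while setting the off-diagonal coefficient elements to zero, so $\det M_i$ takes a nonzero value at that assignment. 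Hence $\det M_i\not\equiv 0$ as a polynomial over $GF(q)$.

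Next I would bound the degree and invoke Schwartz--Zippel. Every entry of $R^i$, and therefore of $M_i$, is either $0$ or a single coefficient variable of degree one, so each term of the Leibniz expansion of $\det M_i$ is a product of $m$ such entries and $\deg(\det M_i)\le m$. Applying Lemma~\ref{sch-zip} with $d=m$ and $|S|=q$ gives $\Pr[\det M_i=0]\le \frac{m}{q}$. Since $\det M_i\neq 0$ forces $\mathrm{rank}(R^i)=m$, the client fails only if $\det M_i=0$, whence $\Pr[\text{client }i\text{ decodes}]\ge 1-\frac{m}{q}=1-\frac{(N-1)(N-2)}{2q}$, uniformly over $i$.

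I expect the main obstacle to be the ``not identically zero'' step rather than the estimation: Schwartz--Zippel is vacuous unless one certifies a nonzero evaluation, and that certificate must use field elements of $GF(q)$ itself (not an abstract extension), which is precisely why I would lean on the explicit $GF(q)$-valued assignment produced inside the proof of Lemma~\ref{lemma_matrix}. A secondary subtlety concerns the quantifier ``each client'': the argument above delivers the bound $1-\frac{m}{q}$ for every individual client, but the joint event that all $N$ clients decode simultaneously would, via a union bound over the $N$ determinant polynomials (or a single product polynomial of degree $\le Nm$), only yield $1-\frac{Nm}{q}$; I would therefore read the statement as the per-client guarantee, for which the degree-$m$ determinant is exactly the right object.
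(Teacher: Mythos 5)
Your proof is correct and follows essentially the same route as the paper's: use the ILP constraints together with Theorem~\ref{Theorem.sufficient} and Lemma~\ref{lemma_matrix} to certify that the relevant determinant is not identically zero as a polynomial, bound its degree by $m=\frac{(N-1)(N-2)}{2}$, and apply the Schwartz--Zippel lemma to get the per-client bound $1-\frac{m}{q}$. Your write-up is in fact tighter than the paper's own (which speaks of ``the determinant'' of the non-square matrix $R^i$ without isolating an $m\times m$ submatrix or an explicit nonzero evaluation), and your reading of ``each client'' as a per-client guarantee --- rather than a joint guarantee, which would need a union bound giving $1-\frac{Nm}{q}$ --- matches what the paper actually proves.
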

\begin{proof}
As in Theorem~\ref{Theorem.sufficient}, for any $k<N$ clients, the total number of packets they send is at least $\binom{k}{2}$. We then try to find a feasible set of the coefficients such that the local receiving matrix of each client $i$, $R^i$, is with rank $\frac{(N-1)(N-1)}{2}$.


For a matrix with maximum rank $\frac{(N-1)(N-2)}{2}$, the maximum degree of the coefficient variants should be $\frac{(N-1)(N-2)}{2}$. According to Lemma~\ref{sch-zip}, the probability that the determinant of this matrix is zero should be at most $\frac{(N-1)(N-2)}{2q}$. Hence, the probability that the determinant of the matrix is non-zero is at least
 $$1-\frac{(N-1)(N-2)}{2q}$$
where $q$ is the field size.

Thus, the probability that client $i$ can finally decode its ``wanted" packets with the local receiving matrix $R^i$ is at least $1-\frac{(N-1)(N-2)}{2q}$.
\end{proof}

Based on the above lemma, when the number of clients $N$ is fixed, we can increase the field size to enhance the probability that each client can finally decode its ``wanted" packets. The lower bound of the probability is shown in Table~\ref{table}.

\begin{table}[ht]
\caption{The probability lower bound in Theorem~\ref{Theorem.probability}} 
\centering 
\begin{tabular}{|c|c |c |c|c|c|} 
\hline 
 & {\em N=4} & {\em N=6} & {\em N=8} & {\em N=10} & {\em N=12} \\ 
 & {\em K=6} & {\em K=15}& {\em K=28}& {\em K=45} & {\em K=66}\\
\hline 
{\em q=256} & 0.9883 & 0.9609 &  0.9180 & 0.8594 &0.7852 \\ \hline
{\em q=512} & 0.9941 & 0.9805 &  0.9590 & 0.9297 & 0.8926 \\ \hline
\end{tabular}
\label{table} 
\end{table}

For example, when the total number of clients is $N=6$, which means the total number of packets needed to be exchanged is $K=\frac{N(N-1)}{2}=15$, the probability that each client can decode its ``wanted" packets is more than $98.05\%$, if we randomly select the coefficients from $q=512$.

\section{Conclusion}\label{Sec.conclusion}
In this paper, we aim to design a network coded cooperative information exchange scheme to minimize the total transmission cost for exchanging third-party information. We derive a necessary and sufficient condition for the feasible transmission scheme.
We prove that for any $k$ clients, where $1\leq k< N$, if the total number of packets they send is at least $\binom{k}{2}$,
there exists a feasible code design to make sure each client can finally obtain its ``wanted" packets. We further formulate the problem of minimizing the total transmission cost for third-party information exchange as an integer linear programming. Our analysis also shows that the clients with lower transmission cost should send more packets than the clients with higher transmission cost. Finally, based on the transmission scheme obtained by ILP, we provide a lower bound of the probability that each client can decode its ``wanted" packets, if random network coding is used.

\section{Acknowledgements}
This research is partly supported by the International Design Center (grant no. IDG31100102 \& IDD11100101). Li's work is partially supported by NSF under the Grants No CCF\-082988, CMMI\-0928092, and OCI\-1133027.

\end{document}